\documentclass[11pt]{article}
\usepackage[left=1in,right=1in,top=0.9in,bottom=0.9in]{geometry}
\usepackage{amsmath,amsthm,amssymb}
\usepackage[colorlinks=true,citecolor=blue,linkcolor=blue,urlcolor=blue]{hyperref}

\newtheorem{theorem}{Theorem}
\newtheorem{lemma}[theorem]{Lemma}
\newtheorem{corollary}[theorem]{Corollary}
\newcommand{\mms}{\operatorname{MMS}}

\title{1-out-of-5 Maximin-Share Allocations Always Exist for Four Agents}
\author{Christoph Schwerdtfeger\\
  \texttt{christoph@kontingenz.com}}
\date{2026-07-16}

\begin{document}

\maketitle

\begin{abstract}
For four agents with nonnegative additive valuations, a complete 1-out-of-5
maximin-share allocation always exists, improving the previous 1-out-of-6
guarantee.  Together with known exact-MMS counterexamples, this completely
characterizes the four-agent case: the guarantee holds exactly for $d\geq5$.
The main technical contribution is a balanced-residual partition lemma:
removing rejected bundles with one of the four highest-ranked goods apiece
leaves a remainder that still admits the required number of unit-valued
balanced bundles.  In its central $2+2$ case, three unit bundles repair two
pairs of colliding high-valued goods.  The theorem is machine-checked in Lean~4.
\end{abstract}

\section{Introduction}

The maximin share formalizes a simple precaution.  An agent partitions the
goods into one bundle per agent, expecting to receive the bundle she values
least.  Her maximin share is the greatest value she can secure in this way.
This is an appealing fairness benchmark, but an allocation that gives every
agent her maximin share need not exist~\cite{kurokawa2018fair}.

There are two standard relaxations.  A multiplicative approximation promises
a fraction of the original share.  An \emph{ordinal} approximation instead
lets the agent partition the goods into more bundles: her 1-out-of-$d$
maximin share is the best least-bundle value in a $d$-partition, although the
final allocation still has only $n$ recipients.  Larger $d$ weakens the
benchmark without replacing it by a fixed cardinal fraction.

Budish proposed the 1-out-of-$(n+1)$ benchmark in the context of approximate
competitive equilibrium from equal incomes~\cite{budish2011combinatorial}.
Hosseini, Searns, and Segal-Halevi developed the general ordinal-MMS framework
and proved that a 1-out-of-$\lfloor 3n/2\rfloor$ allocation always
exists~\cite{hosseini2022ordinal}.  Akrami, Garg, Sharma, and Taki later gave
the asymptotically stronger general bound
$4\lceil n/3\rceil$~\cite{akrami2024improving}.  At $n=4$ this formula gives
denominator $8$, so the Hosseini--Searns--Segal-Halevi denominator $6$ remained
best.  Recent work continued to list universal 1-out-of-$(n+1)$ existence as
unresolved~\cite{akrami2026simultaneous}.

For four agents, only one denominator remained unsettled: exact MMS, or
1-out-of-4 MMS, can fail, whereas 1-out-of-6 MMS is always attainable.  We
close the gap by proving that every finite four-agent instance admits a
complete 1-out-of-5 maximin-share allocation.  Since the benchmark weakens
with the denominator, five is the smallest universal denominator.

The proof uses the normalization, ordering, and restricted Lone Divider
architecture of Hosseini, Searns, and Segal-Halevi.  In Lone Divider, one
remaining agent partitions the residual goods into acceptable bundles; an
envy-free matching allocates a nonempty subfamily and the procedure repeats.
For indivisible goods, the threshold is useful only if every future divider
can repartition what remains after previously allocated bundles are removed.
Plain Lone Divider fails this test at denominator $2n-3$, exactly $5$ when
$n=4$~\cite{hosseini2022ordinal}.  Following Hosseini, Searns, and
Segal-Halevi, we require every allocated bundle to contain exactly one of the
four highest-ranked goods.  We call these goods \emph{anchors} and such
bundles \emph{balanced}.  Removing $k$ balanced bundles removes exactly $k$
anchors, leaving one anchor for each of the $4-k$ residual bundles that the
next divider must form.

The main lemma says that balance is enough.  Normalize a divider's value so
that five witness cells each have value one.  If $k$ balanced bundles have
already been allocated and the divider values each below one, then the
residual goods can be partitioned into $4-k$ balanced bundles that she values
at least one.  For $k\geq1$, minimal filling and value accounting handle the
residual problem.  The hard case is the initial four-way split, where two
witness cells may each contain two anchors.  Separating the smaller anchor
from each collision creates four complementary deficits.  Three anchor-free
unit cells meet them: two nearest-crossing cuts leave controlled errors, and
the third cell repairs both at once.

Once this single-agent lemma is available, the allocation argument is brief.
An envy-free matching preserves every remaining agent's rejection of earlier
bundles, so the lemma can be invoked again.  A maximal partial allocation
cannot leave a positive-share agent unmatched.  Zero shares are handled in
the same matching argument and require no reduction to a three-agent theorem.

\section{Model}

Let $N=\{1,2,3,4\}$ be the agents and let $M$ be a finite set of indivisible
goods.  Agent $i$ has a nonnegative additive valuation $v_i$; thus

\[
  v_i(S)=\sum_{g\in S}v_i(g)
  \qquad\text{for every }S\subseteq M.
\]

Write $\Pi_d(X)$ for the labeled $d$-partitions of a set $X$.  Empty cells
are allowed.  The 1-out-of-$d$ maximin share of agent $i$ from $X$ is

\[
  \mms_i^d(X)
  =\max_{(P_1,\ldots,P_d)\in\Pi_d(X)}
       \min_{j\in\{1,\ldots,d\}} v_i(P_j).
\]

All maxima are attained because $M$ is finite.  An allocation is a labeled
partition $(A_i)_{i\in N}$ of $M$; empty bundles are allowed and no good is
discarded.  It is a 1-out-of-$d$ MMS allocation if

\[
  v_i(A_i)\geq \mms_i^d(M)
  \qquad\text{for every }i\in N.
\]

Our result is the following.

\begin{theorem}[Four-agent 1-out-of-5 MMS]\label{thm:main}
For every finite set $M$ and every profile of nonnegative additive valuations
$(v_1,v_2,v_3,v_4)$, there is a partition
$(A_1,A_2,A_3,A_4)$ of $M$ such that
\[
  v_i(A_i)\geq \mms_i^5(M)
  \qquad\text{for every }i\in N.
\]
\end{theorem}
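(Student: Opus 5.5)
We prove Theorem~\ref{thm:main} with the restricted Lone Divider scheme described in the introduction, organized around a single-agent partition lemma.

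\textbf{Setup.} For each agent $i$ with $\mms_i^5(M)>0$, we rescale $v_i$ so that $\mms_i^5(M)=1$. We fix a witness $5$-partition whose cells each have value at least~$1$. Following Hosseini, Searns, and Segal-Halevi, we may assume all agents rank the goods in the same order. This ordered-instance reduction preserves ordinal MMS guarantees: an allocation for the ordered instance is converted back by picking sequentially. The first four goods $g_1,\dots,g_4$ are the anchors, and a bundle is \emph{balanced} if it contains exactly one anchor.

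\textbf{Key lemma.} Let $0\le k\le 3$. Suppose $B_1,\dots,B_k$ are disjoint balanced bundles with $v_i(B_j)<1$ for all $j$. Then $M\setminus\bigcup_j B_j$ can be partitioned into $4-k$ balanced bundles, each of value at least $1$ to agent~$i$.

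For $k\ge1$, we build the bundles greedily. Each remaining anchor gets a bundle, filled by adding goods in decreasing order until its value reaches $1$, so that no bundle exceeds $1$ by more than one small good. Value accounting then closes the argument. The total value is at least $5$, the removed bundles take less than $k$, and ordering bounds the value of low goods, since five cells of value at least~$1$ force $v(g_j)$ small relative to the witness cells for $j\ge 5$. Together these show the residual value suffices to complete the last bundle.

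The case $k=0$ is the main obstacle. If each anchor lies in a different witness cell, we merge the fifth cell into any of the four and are done. If exactly one witness cell holds two anchors, we separate the smaller anchor and pair it with an anchor-free cell. The hard configuration is two cells each containing two anchors, the $2+2$ case. There we split off the smaller anchor of each collision, leaving four bundles with deficits. The three anchor-free unit cells must cover these four deficits. We cut two of them at the nearest crossing point, which is the first prefix in value order that restores the deficit, so each leaves a controlled error. The third cell is then shown to cover both leftover errors simultaneously. I would first try to bound each error by the value of the smaller anchor's partner and use $v(g_3),v(g_4)\le$ the relevant cell values.

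\textbf{Allocation.} We maintain a partial allocation of balanced bundles such that every unallocated agent values every allocated bundle below her share. An unallocated agent divides the remainder via the lemma into $4-k$ acceptable balanced bundles. We then take an envy-free matching in the bipartite graph between unallocated agents and these bundles, where an edge means the bundle is acceptable to the agent. The matching is nonempty because the divider accepts all of them. By the standard Hall-type lemma, the matching can be chosen so that unmatched agents find unmatched bundles unacceptable, which preserves the invariant.

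Iterating allocates a balanced bundle to every agent with positive share. Agents with zero share accept anything, so they can always be matched or take empty bundles. Finally, leftover goods are given to an arbitrary agent, which is harmless by monotonicity.
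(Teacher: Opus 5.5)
Your architecture is the paper's: anchors, a balanced residual lemma proved by minimal fills for $k\ge1$ and by witness-cell occupancy for $k=0$, and then envy-free matching. However, the $k=0$ case has a real hole, and the numerical facts your fills and repairs rely on do not hold under your normalization.

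First, your occupancy analysis covers only $1+1+1+1$, $2+1+1$ and $2+2$. Nothing handles a witness cell containing three or four anchors. This happens even with exact unit cells, for instance when all goods are worth $1/3$ and $g_1,g_2,g_3$ share a witness cell. The missing idea is a threshold on $m_0=v(g_4)$:
\begin{itemize}
\item If $m_0\le1/3$, fill three anchors minimally in turn. None can stall, since $v(M)=5$ and each unused anchor is worth at most one. Each fill exceeds one by at most $m_0$, so the fourth anchor keeps at least $2-3m_0\ge1$.
\item If $m_0>1/3$, a unit cell cannot hold three anchors, and only your three patterns remain.
\end{itemize}

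Second, you normalize so that witness cells are worth \emph{at least} one. Then an item can exceed one, two nonanchors can sum to more than one, a collision's smaller anchor can exceed $1/2$, and a cell can hold three anchors whatever $m_0$ is. Your remark that five cells of value at least one force the low goods to be small therefore gives no usable inequality. The paper instead scales each witness cell down to value exactly one, which is dominated by $v_i/t_i$ and hence harmless, and does this \emph{before} ordering. Pigeonholing the four anchors and two nonanchors $x,y$ into five unit cells then gives $v(x)+v(y)\le1$. This is exactly the bound $\tau_1+\tau_2\le1$ on the two fill triggers that your $k=1$ accounting needs.

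Third, the $2+2$ repair is left as a plan, and the cut you describe does not close the arithmetic. Cutting $U$ and $V$ at the first prefix reaching a demand only gives shortfalls $e,f<m_0$. After carving $Q\subseteq W$ with $e\le v(Q)\le e+m_0$, the rest of $W$ is worth at least $1-e-m_0$. That covers $f$ only if $e+f+m_0\le1$, and with $e,f$ each possibly close to $m_0$ this requires $3m_0\le1$. That is exactly the regime the threshold above already handles, so it does not help where $2+2$ actually needs repair.

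The paper cuts on whichever side of the crossing good (value $h\le m_0$) gives the smaller error. The two candidate errors sum to $h$, so each is at most $m_0/2$ and $e+f\le m_0$. Combined with $m_0\le x\le1/2$, since $x$ is the smaller anchor of a unit collision cell, this yields $e+f+m_0\le2m_0\le1$, and a single prefix split of $W$ repairs both errors. The quantity to track is the bound $m_0$ on anchor-free goods, not the value of the smaller anchor's partner.
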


\section{Proof}

The proof has two layers.  First we change the representation of the instance:
each positive share is normalized to one, and every agent's goods are placed
on a common rank scale.  We then solve the ordered normalized instance by
balanced residual partitions and envy-free matching.

\subsection{Normalization and ordering}

Set $t_i=\mms_i^5(M)$.  Dividing every value by $t_i$ would leave the total
normalized value uncontrolled, since witness cells may contain surplus.  We
instead rescale each witness cell to value exactly one.  This only lowers
values relative to $v_i/t_i$, so any bundle of normalized value at least one
still meets the original target.

\begin{lemma}[Dominated unit-witness normalization]\label{lem:normalization}
If $t_i>0$, there are a nonnegative additive valuation $w_i$ on $M$ and a
five-partition $(P_{i1},\ldots,P_{i5})$ such that
\[
  w_i(P_{ij})=1\quad(j=1,\ldots,5),
  \qquad w_i(M)=5,
  \qquad \mms_i^5(M;w_i)=1,
\]
and, for every $S\subseteq M$,
\begin{equation}\label{eq:domination}
  t_i w_i(S)\leq v_i(S).
\end{equation}
Here $\mms_i^5(M;w_i)$ denotes the maximin share computed using $w_i$.
\end{lemma}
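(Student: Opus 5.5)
We take a maximin witness partition and scale each of its cells down separately to value exactly one. By finiteness there is a five-partition $(P_{i1},\ldots,P_{i5})$ of $M$ attaining $\mms_i^5(M)=t_i$, so $v_i(P_{ij})\geq t_i>0$ for every $j$. For a good $g\in P_{ij}$ define
\[
  w_i(g)=\frac{v_i(g)}{v_i(P_{ij})},
\]
and extend $w_i$ additively. This $w_i$ is nonnegative and additive. Each cell satisfies $w_i(P_{ij})=v_i(P_{ij})/v_i(P_{ij})=1$, and because the cells partition $M$ we get $w_i(M)=5$.

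Domination (\ref{eq:domination}) is checked good by good and then summed. For $g\in P_{ij}$,
\[
  t_i w_i(g)=\frac{t_i}{v_i(P_{ij})}\,v_i(g)\leq v_i(g),
\]
since $t_i\leq v_i(P_{ij})$ and $v_i(g)\geq0$. Summing over $g\in S$ gives $t_iw_i(S)\leq v_i(S)$ for every $S\subseteq M$.

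It remains to show $\mms_i^5(M;w_i)=1$.
\begin{itemize}
\item The partition $(P_{ij})_j$ has minimum $w_i$-value $1$, so $\mms_i^5(M;w_i)\geq1$.
\item For the upper bound, any five-partition $(Q_1,\ldots,Q_5)$ has $\sum_j w_i(Q_j)=w_i(M)=5$. Hence $\min_j w_i(Q_j)\leq 1$, by averaging.
\end{itemize}
Domination is not needed for this step; the averaging bound alone gives equality.

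No step is a real obstacle. The only care needed is to divide by $v_i(P_{ij})$ rather than by $t_i$, and to check that these denominators are positive. Positivity holds because each witness cell has value at least $t_i>0$.
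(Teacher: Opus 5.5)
Your proposal is correct and follows the paper's proof essentially step for step. You take an MMS-attaining witness partition, rescale each cell by its own value (positive since it is at least $t_i$), get $\mms_i^5(M;w_i)=1$ from the witness together with averaging over total value five, and verify domination item by item before summing.
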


\begin{proof}
Choose a five-partition $(P_{i1},\ldots,P_{i5})$ attaining $t_i$, and define

\[
  w_i(g)=\frac{v_i(g)}{v_i(P_{ij})}
  \qquad\text{when }g\in P_{ij}.
\]

Every denominator is positive.  Each witness cell has $w_i$-value exactly
one, and therefore $w_i(M)=5$.  The displayed partition proves that the
1-out-of-5 MMS under $w_i$ is at least one; averaging over the total value
five proves the reverse inequality.  Finally, $v_i(P_{ij})\geq t_i$, so
$t_iw_i(g)\leq v_i(g)$ item by item.  Additivity gives
\eqref{eq:domination} for every bundle.
\end{proof}

For every $i$ with $t_i>0$, fix $w_i$ and a unit witness supplied by
Lemma~\ref{lem:normalization}.  If $t_i=0$, set $w_i\equiv0$.  A bundle of
$w_i$-value at least one is therefore worth at least $t_i$ under $v_i$.

If every $t_i$ is zero, any labeled four-partition of $M$ proves the theorem.
We may thus assume that some target is positive.  Its five unit witness cells
are nonempty, so $m=|M|\geq5$.

We next order the instance.  For each agent separately, list the multiset of
values of the original goods under $w_i$ in nonincreasing order,

\[
  a_{i1}\geq a_{i2}\geq\cdots\geq a_{im},
  \qquad m=|M|,
\]

and define an ordered valuation $\bar w_i$ on common rank goods
$g_1,\ldots,g_m$ by $\bar w_i(g_r)=a_{ir}$.  All agents now agree that $g_1$
is the highest rank, $g_2$ the next highest, and so on, even though their
cardinal values remain different.  Ordering only permutes the multiset of
$w_i$-item values.  Hence, for every positive-target agent, the images of her
five witness cells under the corresponding permutation form a five-partition
of the rank goods into $\bar w_i$-unit cells.

Any allocation of the rank goods transfers back without loss.  Process ranks
$g_1,g_2,\ldots,g_m$ in this order.  When rank $g_r$ belongs to agent $i$,
let $i$ choose a remaining original good that maximizes $w_i$.  At most $r-1$
goods have already been taken, so at least one of her top $r$ original goods
is still available.  Her choice therefore has $w_i$-value at least
$a_{ir}=\bar w_i(g_r)$.  If $A_i$ is her bundle of rank goods and $S_i$ the
resulting bundle of original goods, summing over her turns gives
\begin{equation}\label{eq:ordered-transfer}
  w_i(S_i)\geq \bar w_i(A_i).
\end{equation}
This is the standard ordered-instance reduction~\cite{hosseini2022ordinal}.

It remains to solve the ordered instance.  For every positive-target agent,
the ordered valuation $\bar w_i$ has total value five and admits a
five-partition into unit cells.  In particular every item has value at most
one.  We call $H=\{g_1,g_2,g_3,g_4\}$ the set of \emph{anchors}.  A bundle is
\emph{balanced} if it contains exactly one anchor.  The set $H$ is common to
all agents; only its values differ.

\subsection{Balanced residual partitions}

Fix one positive-target agent and suppress her index.  We write $v$ for her
ordered normalized valuation.  Thus $v(M)=5$, every item is worth at most
one, and there is a witness partition into five cells of value exactly one.

We begin with the inequality that controls the waste in the filling
arguments.

\begin{lemma}[Two nonanchors]\label{lem:tail-pair}
If $x$ and $y$ are distinct goods outside $H$, then
\[
  v(x)+v(y)\leq1.
\]
\end{lemma}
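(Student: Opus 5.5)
We argue by pigeonhole on the witness partition combined with the ordering of the rank goods. Since $x,y\notin H$, each of $x,y$ has rank at least $5$. Hence every anchor $g_1,\dots,g_4$ is worth at least $\max\{v(x),v(y)\}$, and so is every item of rank below that of $x$ or $y$. We would locate a unit witness cell that is forced to contain two goods whose values dominate $v(x)$ and $v(y)$ respectively; its value is then at least $v(x)+v(y)$. Since witness cells have value exactly one, the inequality follows.

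Concretely, let $T=\{g_1,g_2,g_3,g_4,g_5\}$ be the five highest-ranked goods, and set $z=\min\{v(x),v(y)\}$ and $Z=\max\{v(x),v(y)\}$. Without loss of generality the ranks of $x$ and $y$ satisfy $5\le r_x<r_y$. Then every good of rank at most $r_x$ has value at least $Z$, because ranks are sorted nonincreasingly and $v(x)\ge v(y)$. Consider the $r_y\ge 6$ goods $g_1,\dots,g_{r_y}$. There are at least six of them, distributed among five witness cells, so some cell contains two of them, say $g_a,g_b$ with $a<b\le r_y$.

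We then bound $v(g_a)+v(g_b)$ from below. Since $a<b$, we have $a\le r_y-1$. If $r_y-1\ge r_x$ this alone does not give $a\le r_x$, so we refine the counting: apply pigeonhole instead to the set $\{g_1,\dots,g_{r_x}\}\cup\{g_{r_y}\}$. This set has $r_x+1\ge 6$ elements, each of value at least $v(y)$, and all but possibly one (namely $g_{r_y}$) have value at least $Z=v(x)$. Two of its elements lie in a common witness cell, and at most one of them is $g_{r_y}$. That pair therefore has value at least $v(x)+v(y)$, and nonnegativity gives that the cell's value of exactly one is at least this sum. This is the whole argument.

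The only step needing care is choosing the pigeonhole set so that the colliding pair dominates both $v(x)$ and $v(y)$ rather than merely $2z$. Selecting all goods of rank at most $r_x$ together with $y$'s own rank handles this, and it uses the hypothesis $x,y\notin H$ exactly to guarantee at least six goods in the set.
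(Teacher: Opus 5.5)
Your proof is correct and is essentially the paper's argument: pigeonhole a set of at least six goods into the five unit witness cells, and observe that any colliding pair has combined value at least $v(x)+v(y)$ because at most one member of the pair can be the less valuable of $x,y$. The paper simply uses the four anchors together with $x$ and $y$ as its six goods. The extra goods of rank between $5$ and $r_x$ in your pigeonhole set, and your abandoned first attempt with $g_1,\dots,g_{r_y}$, are unnecessary but harmless.
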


\begin{proof}
Consider the six goods consisting of the four anchors together with $x$ and
$y$.  Two of them lie in the same cell of the five-cell unit witness.  Their
combined value is at most one.  On the other hand, every anchor is at least as
valuable as both $x$ and $y$.  Consequently every pair among these six goods
has combined value at least $v(x)+v(y)$: if, say, $v(x)\leq v(y)$, then all
five goods other than $x$ have value at least $v(y)$.  The colliding pair
therefore proves the claim.
\end{proof}

We shall repeatedly use the following minimal filling rule.  Start with an
anchor $a$ and a set $F$ of available nonanchors such that
$v(a)+v(F)\geq1$.  Choose an inclusion-minimal $S\subseteq F$ for which
$v(a)+v(S)\geq1$.  If $v(a)=1$, take $S=\varnothing$ and assign the fill a
\emph{virtual trigger} of value $0$.  Otherwise choose any $x\in S$ as its
trigger.  Minimality gives

\begin{equation}\label{eq:fill-bound}
  1\leq v(\{a\}\cup S)<1+v(x).
\end{equation}

The trigger value bounds the fill's overshoot above one.  Real triggers are
nonanchors, and successive ones are distinct because each filled bundle is
removed before the next fill.

The final repair rests on the following splitting principle.

\begin{lemma}[Nearest-crossing split]\label{lem:nearest-crossing}
Let $B$ be a bundle with $v(B)=1$, and suppose every good in $B$ has value at
most $\mu$.  For every $t$ with $0<t<1$, there are a bipartition $(L,R)$ of
$B$ and an error $0\leq\varepsilon\leq\mu/2$ such that either
\[
  v(L)\geq t
  \quad\text{and}\quad
  v(R)\geq1-t-\varepsilon,
\]
or
\[
  v(L)\geq t-\varepsilon
  \quad\text{and}\quad
  v(R)\geq1-t.
\]
\end{lemma}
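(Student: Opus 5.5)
We order the goods of $B$ as $b_1,\dots,b_k$ (any order) and consider prefix sums $s_j=v(\{b_1,\dots,b_j\})$ for $j=0,\dots,k$. Then $s_0=0$ and $s_k=1$, the sequence is nondecreasing, and consecutive differences are $v(b_j)\leq\mu$. Since $0<t<1$, there is a unique index $j$ with $s_{j-1}<t\leq s_j$; this is the ``nearest crossing'' of the level $t$. The plan is to cut either just before or just after the crossing good $b_j$, whichever is closer to $t$. We take $L$ to be the corresponding prefix and $R$ the complementary suffix, so $v(R)=1-v(L)$ because $v(B)=1$.

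Concretely, set $\delta^-=t-s_{j-1}>0$ and $\delta^+=s_j-t\geq0$. Then $\delta^-+\delta^+=v(b_j)\leq\mu$, so $\min(\delta^-,\delta^+)\leq\mu/2$. We split into two cases.

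If $\delta^+\leq\mu/2$, we take $L=\{b_1,\dots,b_j\}$ and $\varepsilon=\delta^+$. Then $v(L)=s_j\geq t$ and $v(R)=1-s_j=1-t-\varepsilon$, which is the first alternative.

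Otherwise $\delta^-\leq\mu/2$, and we take $L=\{b_1,\dots,b_{j-1}\}$ and $\varepsilon=\delta^-$. Then $v(L)=t-\varepsilon$ and $v(R)=1-s_{j-1}=1-t+\varepsilon\geq1-t$, which is the second alternative.

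In both cases $0\leq\varepsilon\leq\mu/2$.

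There is no real obstacle here. The only points needing care are the boundary conventions. Empty $L$ or $R$ is allowed, and this case occurs when $j=1$ or the crossing is at the end. The strictness of $s_{j-1}<t$ guarantees that $j\geq1$ exists. We also use the fact that $v(B)=1$ exactly, so that the complement value is determined by $v(L)$. No earlier lemma is needed.
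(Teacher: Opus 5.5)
Your proof is correct and follows the paper's argument: cut the prefix order just before or just after the first good at which the running sum reaches $t$, note that the two resulting shortfalls sum to that good's value (at most $\mu$), and take the smaller one. Your bookkeeping with $\delta^-$, $\delta^+$ and the identity $v(R)=1-v(L)$ from $v(B)=1$ spells out in detail what the paper states in words.
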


\begin{proof}
Order the goods of $B$ arbitrarily and stop when a prefix first reaches $t$.
Let $p<t$ be the value before the crossing good and let $h$ be that good's
value, so $p+h\geq t$.  Cutting before the crossing good leaves the first
part short by $t-p$ and makes the complement meet its demand $1-t$.  Cutting
after it makes the first part meet $t$ and leaves the complement short by
$p+h-t$.  The two possible errors sum to $h\leq\mu$; choosing the smaller
gives an error at most $\mu/2$.
\end{proof}

\begin{lemma}[Balanced residual lemma]\label{lem:residual}
Let $D_1,\ldots,D_k$, where $0\leq k<4$, be pairwise disjoint balanced
bundles with $v(D_j)<1$ for every $j$.  The residual goods
\[
  R=M\setminus\bigcup_{j=1}^kD_j
\]
can be partitioned into $4-k$ balanced bundles, each of value at least one.
\end{lemma}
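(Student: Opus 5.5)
The argument splits by $k$: the residual cases $k\geq1$ are handled by greedy minimal filling with value accounting, and the case $k=0$ needs a separate structural argument.

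\emph{Case $k\geq1$.} Let $a_1,\dots,a_{4-k}$ be the anchors remaining in $R$ and $F$ the nonanchors of $R$. Since each removed bundle has value below one, $v(R)>5-k$. I will fill the residual anchors one at a time using the minimal filling rule. The last anchor receives all leftover goods, so only the first $3-k$ fills need a supply check. Suppose $j$ fills have been completed, with triggers $x_1,\dots,x_j$. By \eqref{eq:fill-bound}, the goods they consumed total less than $j+\sum v(x_i)$. For the next anchor, the available value is therefore more than
\[
5-k-j-\sum_{i\le j}v(x_i),
\]
and this must be at least one.

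For $j\le 1$ this is immediate, since one trigger is worth at most one and $5-k-1-1\ge 1$ when $k\le 2$. The tight case is $k=1$, $j=2$. There, $v(x_1)+v(x_2)\le 1$ by Lemma~\ref{lem:tail-pair}, because the two real triggers are distinct nonanchors; a virtual trigger contributes $0$. The remaining value is then more than $4-2-1=1$, which suffices. The final bundle likewise gets value more than $5-k-(3-k)-\sum v(x_i)\ge 1$, again using Lemma~\ref{lem:tail-pair} for $k=1$ and the trivial bound otherwise. The filled bundles must contain enough value from the available set for the fill to exist, and the accounting above certifies exactly this.

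\emph{Case $k=0$.} Here $R=M$ and $v(M)=5$. I will work from the unit witness $P_1,\dots,P_5$ and split by how the four anchors are distributed among its cells.

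If every cell holds at most one anchor, four cells are already balanced unit bundles. The fifth, anchor-free cell is merged into any of them.

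If exactly one cell holds two anchors (or three, or four), the anchors must be separated. Two cells are then anchor-free (at least two for heavier collisions). I will move the smaller colliding anchor $a$ out and repair the deficit $v(a)$ using the anchor-free cells. The cell that lost $a$ has value $1-v(a)$ and needs $v(a)$ more. The new bundle containing $a$ needs $1-v(a)$. An anchor-free unit cell $B$ can be split into pieces covering these two complementary demands. This uses Lemma~\ref{lem:nearest-crossing}, with the error absorbed by the second anchor-free cell, or simply by combining $a$ with a whole anchor-free cell and giving the spare cell to the depleted one. For the three- and four-anchor collisions, the same "one whole free cell per separated anchor" trick works, since there are enough free cells: $k$ anchors in one cell leave $5-1-(4-k)=k$ free cells.

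\emph{The $2+2$ case is the main obstacle.} Here two cells each hold two anchors, one cell is a singleton-free cell... more precisely, three cells $B_1,B_2,B_3$ are anchor-free. Let $a,b$ be the smaller anchors removed from the colliding cells. This creates four deficits: $v(a)$ and $v(b)$ for the depleted cells, and $1-v(a)$ and $1-v(b)$ for the new bundles. These sum to $2$, while the three free cells give value $3$.

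My plan is as follows. Apply Lemma~\ref{lem:nearest-crossing} to $B_1$ with $t=1-v(a)$, and to $B_2$ with $t=1-v(b)$. This leaves two shortfalls $\varepsilon_1,\varepsilon_2$, each at most half the largest nonanchor value. Then cut $B_3$ to cover both errors. By Lemma~\ref{lem:tail-pair}, any two goods outside $H$ sum to at most one, so the largest items of $B_1$ and $B_2$ have combined value at most one. This gives $\varepsilon_1+\varepsilon_2\le 1/2\cdot(\text{that sum})\le 1/2$, and it should be possible to carve $B_3$ into two pieces covering the two errors. Covering them separately, with rounding handled by a further nearest-crossing choice of where the error lands, is the delicate accounting I expect to need care. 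Finally, merge any leftover goods arbitrarily.
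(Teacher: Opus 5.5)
The gap is the last step of the $2+2$ case, which you leave as ``it should be possible to carve $B_3$ into two pieces covering the two errors.'' The bound you derive, $\varepsilon_1+\varepsilon_2\le 1/2$, is not enough by itself. Whether a unit cell can cover two demands also depends on how large its goods are: a cell made of goods worth $0.9$ and $0.1$ cannot cover two errors of $0.25$.

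Write $\mu_i$ for the largest good in $B_i$. The missing inequality is $\varepsilon_1+\varepsilon_2+\mu_3\le 1$. Given it, let $Q$ be the shortest prefix of $B_3$ (in any order) with $v(Q)\ge\varepsilon_1$. Then $v(Q)\le\varepsilon_1+\mu_3$, so $v(B_3\setminus Q)\ge 1-\varepsilon_1-\mu_3\ge\varepsilon_2$, and both errors are covered exactly.

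The inequality holds for the following reason. Each collision cell is a unit cell holding two anchors, so $m_0=v(g_4)\le v(a)\le 1/2$. Every good in $B_1,B_2,B_3$ is a nonanchor and so is worth at most $m_0$. Hence $\varepsilon_1+\varepsilon_2\le m_0$, and $\varepsilon_1+\varepsilon_2+\mu_3\le 2m_0\le 1$. Your tail-pair idea also works if you apply it to the right pairs: $\varepsilon_1+\varepsilon_2+\mu_3\le\frac{\mu_1+\mu_3}{2}+\frac{\mu_2+\mu_3}{2}\le 1$.

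The further nearest-crossing cut you suggest is the wrong tool here, because it would leave a new uncovered error. The greedy prefix is what makes the repair exact, and this is how the paper closes the case, with $\mu=m_0$ throughout.

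Two smaller points. First, in the $k\geq 1$ accounting, a fill exists when the current anchor plus the available nonanchors reach one, not when the whole remaining pool does. You must subtract the other anchors still in the pool, each worth at most one. The slack is sufficient: for $k=1$, $j=1$ the pool exceeds $3-v(x_1)$, and removing the last anchor still leaves more than one. But the criterion you state is not the right one.

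Second, your treatment of the $3+1$ and $4$ patterns is correct and is a real simplification over the paper. You give each separated anchor a whole anchor-free cell and give the depleted cell the spare one. The paper instead splits on $m_0\le 1/3$ (sequential filling) versus $m_0>1/3$, and uses the split only to rule out three anchors in one witness cell. Its $2+2$ argument needs nothing beyond $m_0\le 1/2$, which the pattern itself supplies. So once the $2+2$ repair is completed as above, your occupancy-pattern route needs no threshold on $m_0$ at all.
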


\begin{proof}
The removed bundles contain $k$ distinct anchors, so $R$ contains exactly
$4-k$ anchors.  We first dispose of the cases $k\geq1$.

If $k=3$, then $v(R)>5-3=2$ and $R$ contains one anchor.  The entire residual
set is the required bundle.

Suppose $k=2$.  Then $v(R)>3$ and two anchors remain.  One anchor together
with all residual nonanchors has value at least one: otherwise this set and
the other anchor, whose value is at most one, would give $v(R)<2$.  Fill the
first anchor minimally and let $\tau$ be its trigger value, with $\tau=0$
for a virtual trigger.  The filled bundle has value at most $1+\tau$, and
$\tau\leq1$.  The bundle left for the second anchor therefore has value
strictly greater than
\[
  3-(1+\tau)=2-\tau\geq1.
\]

Suppose $k=1$.  Now $v(R)>4$ and three anchors remain.  The first anchor can
be filled, since otherwise its candidate set together with the other two
anchors would have value below three.  After a first fill with trigger
$\tau_1$, the residual value is greater than $3-\tau_1\geq2$; hence a second
anchor can also be filled.  Let its trigger be $\tau_2$.  The final bundle
has value greater than
\[
  4-(1+\tau_1)-(1+\tau_2)
  =2-(\tau_1+\tau_2).
\]
If both triggers are real, they are distinct nonanchors, so
$\tau_1+\tau_2\leq1$ by Lemma~\ref{lem:tail-pair}.  If exactly one trigger is
real, its value is at most one because every item belongs to a unit witness
cell; if both are virtual, their sum is zero.  Thus in every case
$\tau_1+\tau_2\leq1$, and the final bundle has value at least one.

It remains to treat $k=0$.  Write
\[
  m_0=v(g_4),
\]
the value of the least valuable anchor.  Every nonanchor has value at most
$m_0$.  The cutoff $1/3$ is dictated by the accounting from both directions:
below it, three minimal fills can overshoot by at most one in total; above it,
no unit witness cell can contain three anchors.

First suppose $m_0\leq1/3$.  Fill three anchors in sequence and give all
remaining goods to the fourth.  None of the three fills can stall.  Before
the $j$th fill, where $j\in\{1,2,3\}$, the previous bundles have total value
at most $(j-1)(1+m_0)$.  If the current anchor and all available nonanchors
had value below one, then the remaining $4-j$ anchors, each worth at most
one, would give
\[
  v(M)<(j-1)(1+m_0)+1+(4-j).
\]
For $j=1,2,3$ the right-hand side is respectively
$4$, $4+m_0$, and $4+2m_0$, always below $5$.  This contradicts $v(M)=5$.
Each trigger is at most $m_0$, so the last bundle has value at least
\[
  5-3(1+m_0)=2-3m_0\geq1.
\]

Now suppose $m_0>1/3$.  Inspect the locations of the four anchors in the
five unit cells of the witness partition.  No cell contains three anchors.
Ignoring cells containing no anchors, the possible occupancy patterns are
therefore
\[
  1+1+1+1,\qquad 2+1+1,\qquad 2+2.
\]

For the pattern $1+1+1+1$, take the four anchored witness cells and merge the
anchor-free cell into any one of them.

For the pattern $2+1+1$, keep the two singleton-anchor cells.  Let $C$ be the
collision cell, with anchors $a,b$, and let $U,V$ be the two anchor-free unit
cells.  The other two balanced bundles are
\[
  U\cup\{a\}
  \qquad\text{and}\qquad
  V\cup(C\setminus\{a\}).
\]
Their values are at least $1$ and $2-v(a)\geq1$, respectively, and together
with the singleton cells they cover all goods.

The pattern $2+2$ is the only case in which unit cells must be split.  Let
$C,C'$ be the two collision cells and $U,V,W$ the three anchor-free unit
cells.  In $C$, let $a$ be the less valuable anchor and set $x=v(a)$; define
$b$ and $z=v(b)$ similarly in $C'$.  Since $a$ and $b$ are anchors but are the
smaller members of pairs whose total value is at most one,
\begin{equation}\label{eq:xz-range}
  m_0\leq x,z\leq\frac12.
\end{equation}
Regard the collision cells as four one-anchor cores:
$C\setminus\{a\}$, $\{a\}$, $C'\setminus\{b\}$, and $\{b\}$.  Their values are
\[
  1-x,\quad x,\quad 1-z,\quad z.
\]
To raise all four cores to one, we need pieces of respective values
\[
  x,\quad 1-x,\quad z,\quad 1-z
\]
from $U\cup V\cup W$.

Use $U$ for the two complementary demands $x,1-x$ arising from $C$, and use
$V$ for the demands $z,1-z$ arising from $C'$.  Applying
Lemma~\ref{lem:nearest-crossing} with $\mu=m_0$ partitions each unit cell into
two pieces: one meets its assigned demand, while the other may be short.  Pair
the two pieces from $U$ with the two cores from $C$ in the orientation supplied
by the lemma, and similarly pair the pieces from $V$ with the cores from $C'$.
Let the two possible shortfalls be $e$ and $f$.  Then
\[
  0\leq e,f\leq m_0/2,
  \qquad e+f\leq m_0.
\]
Order the goods of $W$ arbitrarily and add them to $Q$ until its value first
reaches $e$, taking $Q=\varnothing$ if $e=0$.  The last item added is worth
at most $m_0$, so
\[
  e\leq v(Q)\leq e+m_0.
\]
The complement has enough value to repair the other shortfall, because
\[
  v(W\setminus Q)
  \geq1-e-m_0
  \geq f;
\]
the final inequality is equivalent to
$e+m_0+f\leq1$, which follows from
$e+f\leq m_0$ and $m_0\leq1/2$.

Attach $Q$ and $W\setminus Q$ to the two possibly deficient pieces.  We have now
partitioned $U\cup V\cup W$ into four parts meeting the four demands.  Attach
those parts to the corresponding anchored cores.  The result is a balanced
four-partition of value at least one in every cell.  This completes the final
occupancy pattern and the proof of the lemma.
\end{proof}

\subsection{Envy-free matching and completion}

The matching step is purely graph-theoretic.  A matching between agents and
bundles is \emph{envy-free} if no unmatched agent is adjacent to a matched
bundle.  The following special case is the form needed here; it is part of the
envy-free matching framework of Aigner-Horev and Segal-Halevi
~\cite{aignerhorev2022envyfree}.

\begin{lemma}[Envy-free matching]\label{lem:matching}
Let $A$ and $B$ be finite sets of the same positive size, with a bipartite
graph between them.  If some vertex $d\in A$ is adjacent to every vertex of
$B$, then there is a nonempty matching such that no unmatched vertex of $A$
is adjacent to a matched vertex of $B$.
\end{lemma}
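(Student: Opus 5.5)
We prove the statement with the standard alternating-path (Hall-type) argument, run from the universal vertex $d$.

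If $G$ has a perfect matching, we take it. Then every vertex of $A$ is matched, so there is no unmatched vertex to violate the envy-free condition, and the matching is nonempty because $|A|=|B|>0$.

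Otherwise, by Hall's theorem some subset $X\subseteq A$ violates Hall's condition, that is $|N(X)|<|X|$. Among all such subsets, choose one maximizing the surplus $|X|-|N(X)|$, and among those pick $X$ inclusion-minimal (any fixed tie-break would do). Set $A'=A\setminus X$ and $B'=B\setminus N(X)$. We then argue in two steps.

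\emph{Step 1: $A'$ can be matched into $B'$.} Suppose instead that Hall's condition fails for $A'$ in the subgraph induced on $A'\cup B'$. Then some $Y\subseteq A'$ has $|N(Y)\cap B'|<|Y|$. Now
\[
|N(X\cup Y)|\le |N(X)|+|N(Y)\cap B'|,
\]
so $X\cup Y$ has strictly larger surplus than $X$, contradicting the choice of $X$. Hence a matching $\mu$ saturating $A'$ into $B'$ exists.

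\emph{Step 2: $\mu$ is envy-free and nonempty.} All matched bundles lie in $B'=B\setminus N(X)$. The unmatched agents are exactly those in $X$, and by definition no vertex of $X$ is adjacent to anything outside $N(X)$. So no unmatched vertex of $A$ is adjacent to a matched vertex of $B$.

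The main obstacle is nonemptiness, that is, showing $A'\neq\varnothing$, and this is where the universal vertex $d$ is used. If $d\in X$, then $N(X)=B$, so $|N(X)|=|B|=|A|\ge|X|$, which contradicts $|N(X)|<|X|$. Hence $d\notin X$, so $d\in A'$, and therefore $\mu$ contains at least one edge.

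The same reasoning covers the perfect-matching case, since there $X=\varnothing$ and $A'=A$. Note that the maximal-surplus choice is essential in Step 1; taking an arbitrary Hall violator would not suffice.
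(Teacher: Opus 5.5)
Your proof is correct and is essentially the paper's argument: take a set $X$ of maximum deficiency, use maximality to get Hall's condition between $A\setminus X$ and $B\setminus N(X)$, and use the universal vertex $d$ to show $d\notin X$. The only difference is cosmetic: you treat the perfect-matching case separately and maximize only over strict Hall violators, whereas the paper allows deficiency zero and picks $X$ of minimum cardinality to force $X=\varnothing$ there, so your inclusion-minimal tie-break is not actually needed.
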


\begin{proof}
For $X\subseteq A$, write $N(X)$ for its neighborhood and define its
deficiency by $\delta(X)=|X|-|N(X)|$.  Choose a set $X$ of maximum deficiency,
and among those choose one of minimum cardinality.

The graph induced between $A\setminus X$ and $B\setminus N(X)$ satisfies
Hall's condition.  Indeed, if some $S\subseteq A\setminus X$ had fewer than
$|S|$ neighbors outside $N(X)$, then
\[
  \delta(X\cup S)
  =\delta(X)+|S|-|N(S)\setminus N(X)|
  >\delta(X),
\]
contradicting maximality.  Hall's theorem therefore gives a matching that
saturates $A\setminus X$ using only bundles in $B\setminus N(X)$.

This matching is nonempty.  If the maximum deficiency is zero, minimality
forces $X=\varnothing$.  If it is positive, $X$ cannot contain $d$, since
then $N(X)=B$ and $\delta(X)=|X|-|B|\leq0$.  Thus in either case
$A\setminus X$ is nonempty.  Finally, the unmatched vertices lie in $X$, and
by construction they have no neighbor among the matched bundles in
$B\setminus N(X)$.
\end{proof}

\begin{proof}[Proof of Theorem~\ref{thm:main}]
The conclusion is immediate when every target is zero, so assume that some
target is positive.  We work first in the ordered normalized instance.  A
positive-target agent $i$ accepts a bundle if its $\bar w_i$-value is at least
one.  A zero-target agent accepts every bundle.

Call a partial allocation \emph{feasible} if:
\begin{enumerate}
  \item its assigned bundles are pairwise disjoint and balanced;
  \item every assigned positive-target agent receives a bundle she accepts;
  \item every unassigned positive-target agent values every assigned bundle
        strictly below one.
\end{enumerate}

The empty partial allocation is feasible.  Among all feasible partial
allocations, choose one assigning as many agents as possible.  Suppose some
positive-target agent $i$ is unassigned.  If $k$ agents have already been
assigned, then $k<4$, and condition~3 says that $i$ values each of their
balanced bundles below one.  Apply Lemma~\ref{lem:residual} to $i$'s
valuation.  Agent $i$ can partition all residual goods into $4-k$ balanced
bundles that she values at least one.

Form the acceptability graph on the $4-k$ unassigned agents and the $4-k$
proposed bundles.  Because $i$ accepts every proposal,
Lemma~\ref{lem:matching} gives a nonempty envy-free matching.

Allocate the matched bundles and retain the old allocation; unmatched proposed
bundles return to the residual pool.  Every newly matched positive agent
receives an acceptable bundle.  Every positive agent left unmatched has no edge to a
newly allocated bundle, so she values it below one; she also continues to
reject all old bundles.  The enlarged partial allocation is therefore
feasible, contradicting maximality.

Hence every positive-target agent is assigned in the maximal partial
allocation.  Give every unallocated good to an arbitrary agent.  This cannot
hurt an assigned agent because valuations are nonnegative, and every
zero-target agent is satisfied even with an empty bundle.  We obtain a
complete ordered allocation $(A_1,A_2,A_3,A_4)$ in which each positive agent
$i$ receives $\bar w_i(A_i)\geq1$.

Transfer the ordered allocation back to the original goods by the picking
sequence described above, obtaining bundles $(S_1,S_2,S_3,S_4)$.  By
\eqref{eq:ordered-transfer}, every positive agent has $w_i(S_i)\geq1$.
Finally apply~\eqref{eq:domination}: she receives original value at least
$t_i=\mms_i^5(M)$.  Every zero-target inequality follows from
nonnegativity.  This proves Theorem~\ref{thm:main}.
\end{proof}

\begin{corollary}[Complete four-agent characterization]
\label{cor:characterization}
Let $d\geq1$ be an integer.  A complete 1-out-of-$d$ MMS allocation is
guaranteed in every four-agent instance with nonnegative additive valuations
if and only if $d\geq5$.
\end{corollary}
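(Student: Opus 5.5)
The corollary has two directions, and I would prove them separately.

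\emph{Sufficiency ($d\geq5$).} This should follow from Theorem~\ref{thm:main} by monotonicity of the benchmark in $d$. The key step is to show that $\mms_i^d(M)\leq\mms_i^5(M)$ for every $d\geq5$. Take a $d$-partition $(P_1,\ldots,P_d)$ attaining $\mms_i^d(M)$. Merge the cells $P_5,\ldots,P_d$ into the fifth cell, which gives a five-partition. By nonnegativity of $v_i$, no cell of the new partition is worth less than the minimum cell of the original, so its least cell has value at least $\mms_i^d(M)$. Empty cells are allowed, so this also works when $M$ is small. The allocation supplied by Theorem~\ref{thm:main} satisfies $v_i(A_i)\geq\mms_i^5(M)\geq\mms_i^d(M)$ for every agent, which proves sufficiency.

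\emph{Necessity ($d\leq4$).} Here I need, for each $d\in\{1,2,3,4\}$, a four-agent instance with no complete 1-out-of-$d$ MMS allocation. The same monotonicity runs the other way: for $d\leq4$ we have $\mms_i^d\geq\mms_i^4$. Hence one instance with no exact (1-out-of-4) MMS allocation rules out every $d\leq4$ at once.

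That instance is the main obstacle, because the paper itself proves no such result. I would invoke the known four-agent exact-MMS counterexample of~\cite{kurokawa2018fair}; the introduction asserts that exact MMS can fail for four agents. If it is unclear that this counterexample applies with $n=4$ specifically, I would pad it to four agents using standard constructions. The aim is to exhibit an instance where no partition into four bundles gives every agent her 1-out-of-4 share.

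With such an instance in hand, the chain $\mms_i^d\geq\mms_i^4$ transfers the failure to every $d\leq4$, and the characterization follows.
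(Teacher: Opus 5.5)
Your proposal is correct and follows the paper's proof. The ingredients are the same: monotonicity of $\mms_i^d$ in $d$ by merging cells (you merge $P_5,\ldots,P_d$ in one step where the paper merges two at a time), Theorem~\ref{thm:main} for $d\geq5$, and one four-agent exact-MMS counterexample transferred to every $d\leq4$. Your padding fallback is unnecessary, since the Kurokawa--Procaccia--Wang construction already gives counterexamples for every $n\geq3$, including $n=4$.
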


\begin{proof}
The maximin share weakly decreases with the denominator.  Indeed, merging any
two cells of a $(d+1)$-partition produces a $d$-partition whose least cell has
no smaller value.  Hence
\[
  \mms_i^d(M)\geq\mms_i^{d+1}(M).
\]
For $d\geq5$, Theorem~\ref{thm:main} therefore gives the desired allocation.
For $d\leq4$, take a four-agent instance with no exact MMS allocation
~\cite{kurokawa2018fair}.  Any allocation meeting every
1-out-of-$d$ share would also meet every 1-out-of-4 share, a contradiction.
\end{proof}

\section{Machine verification}

The theorem is formalized in Lean~4 v4.31.0 against Mathlib v4.31.0.  The
development is sorry-free and uses no project-specific axioms; Lean reports
only \texttt{propext}, \texttt{Classical.choice}, and \texttt{Quot.sound}.
The Lean sources and build instructions will accompany the arXiv submission.
Language models assisted proof search and formalization; Lean checked every
resulting proof term.

\section{Discussion}

Corollary~\ref{cor:characterization} closes the four-agent ordinal-MMS
question: denominator five is sufficient and best possible.  The proof is
existential and does not address computational complexity.

The proof uses three numerical facts that are special to this case: normalized
total value five, four anchors, and the cutoff $1/3$.  Below the cutoff, three
minimal fills leave one unit for the last anchor.  Above it, a unit witness
cell holds at most two anchors, reducing the obstruction to the three
occupancy patterns in Lemma~\ref{lem:residual}.  For more agents, several
collision cells may interact, and the two-error repair from the $2+2$ case no
longer closes the accounting by itself.

The residual lemma, rather than a particular execution of Lone Divider, is the
part of the argument that may extend.  Within this architecture, a general
1-out-of-$(n+1)$ result would require an analogue ensuring that balanced
bundles rejected earlier can be removed without destroying the next divider's
ability to partition the remainder.  For $n=4$, nearest-crossing repair
establishes this property.  An extension along these lines would need a
replacement capable of repairing many simultaneous anchor collisions.

\end{document}